\newtheorem{theorem}{Theorem}[section]
\newtheorem{lemma}[theorem]{Lemma}
\begin{document}
	\title{Performance evaluation of conditional handover in 5G systems under fading scenario}
\author{Souvik Deb, Megh Rathod, Rishi Balamurugan, Shankar K. Ghosh, Rajeev K. Singh and Samriddha Sanyal
\thanks{Megh Rathod, Rishi Balamurugan, Shankar K. Ghosh and Rajeev 
Kumar Singh are with the Department of Computer Science and Engineering, Shiv Nadar Institution of Eminence, Delhi NCR, India. Shankar K. Ghosh will act as corresponding author. Emails: shankar.ghosh@snu.edu.in, shankar.it46@gmail.com}
\thanks{Souvik Deb is with Advanced Computing and Microelectronics Unit, Indian Statistical Institute, Kolkata, India, Email: deb.souvik5@gmail.com}
\thanks{Samriddha Sanyal is with the School of Computing and Data Science, FLAME University, Pune, India, Email: samriddha.sanyal@flame.edu.in}
}

\maketitle
\begin{abstract}
To enhance the handover performance in fifth generation (5G) cellular systems, conditional handover (CHO) has been evolved as a promising solution. Unlike A3 based handover where handover execution is certain after receiving handover command from the serving access network, in CHO, handover execution is \emph{conditional} on the RSRP measurements from both current and target access networks, as well as on mobility parameters such as preparation and execution offsets. Analytic evaluation of conditional handover performance is unprecedented in literature. In this work, handover performance of CHO has been carried out in terms of handover latency, handover packet loss and handover failure probability. A Markov model accounting the effect of different mobility parameters (e.g., execution offset, preparation offset, time-to-preparation and time-to-execution), UE velocity and channel fading characteristics; has been proposed to characterize handover failure. Results obtained from the analytic model has been validated against extensive simulation results. Our study reveal that optimal configuration of $O_{exec}$, $O_{prep}$, $T_{exec}$ and $T_{prep}$ is actually conditional on underlying UE velocity and fading characteristics. This study will be helpful for the mobile operators to choose appropriate thresholds of the mobility parameters under different channel condition and UE velocities.
\end{abstract}

\begin{IEEEkeywords}
Conditional handover; Markov model; Handover failure; Handover latency; Handover packet loss.
\end{IEEEkeywords}

\section{Introduction}
To enhance capacity in fifth generation (5G) cellular system, dense deployment of small cells has been evolved as an well-accepted solution.  While roaming across such dense deployment scenario, user equipment (UEs) have to perform handover from the serving next generation Node B (gNB) to target gNB to sustain connectivity with the system. In the traditional A3 condition based handover \cite{SON_handover_parameters}, the handover execution phase immediately follow the handover preparation phase, which may not be appropriate for millimeter wave (mmWave) communication. This is because mmWave communication is highly susceptible to environmental losses (e.g., path loss due to high frequency, presence of dynamic obstacle etc.), and therefore the channel condition of the target gNB may change right after the handover preparation. As a result, the handover execution may fail leading to handover failure (HOF). 
\begin{figure*}[t]
    \centering
    \includegraphics[scale=0.8]{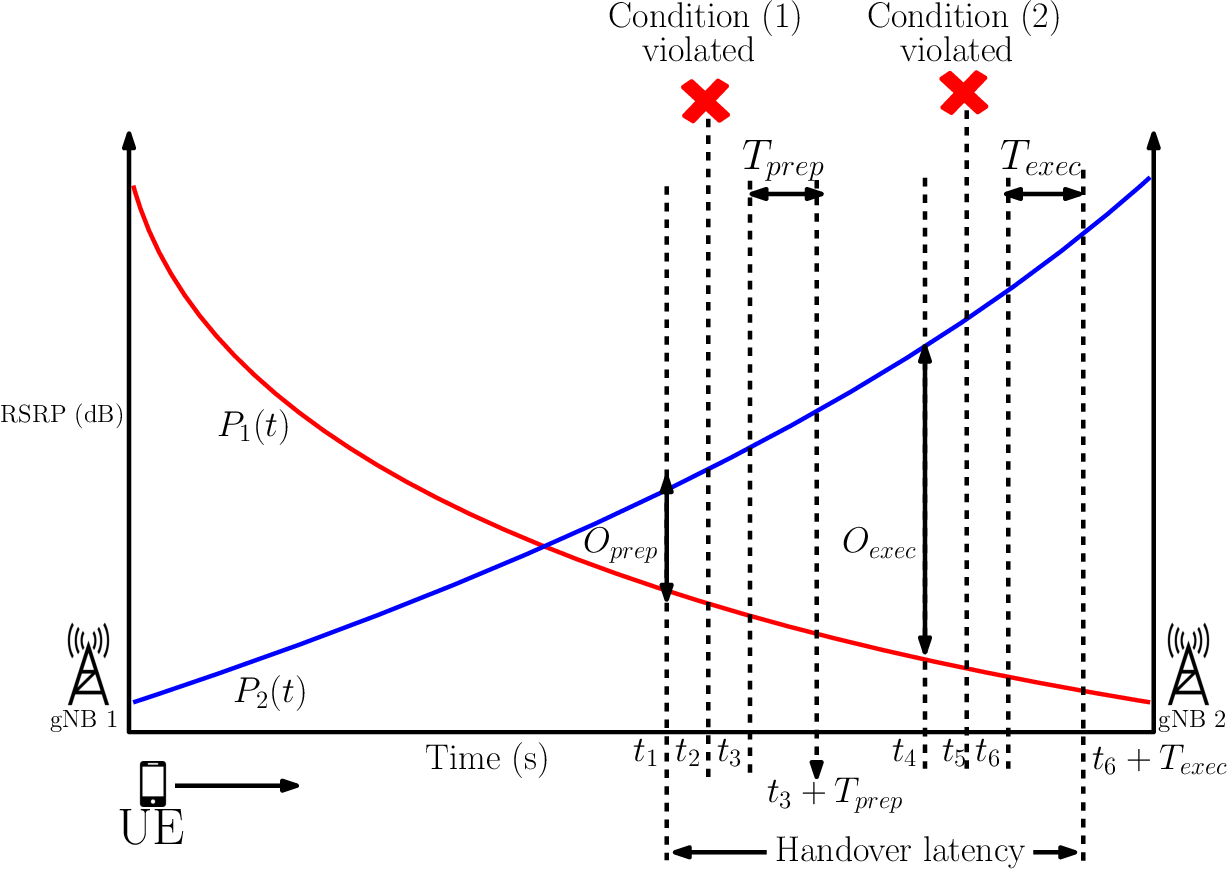}
    \caption{Demonstration of Conditional handover in fading scenario}
    \label{cho_mech}
\end{figure*}

To alleviate the aforesaid effect, conditional handover (CHO) \cite{bsics_of_cho} has been proposed. The idea of CHO is to decouple the handover preparation from handover execution phase. In CHO, the user equipment (UE) may delay the execution phase after handover preparation based on the channel condition of the target gNB, and thereby reducing both radio link failure (RLF) and subsequent HOF. To illustrate, let us consider that an UE is moving from serving gNB (say gNB 1) to target gNB (say gNB 2) with constant velocity through the linear trajectory connecting the centers of gNB 1 and gNB 2 (Fig. \ref{cho_mech}). For handover preparation to be completed, the UE must satisfy the following condition \cite{bsics_of_cho}:
\begin{equation}\label{prep_offset}
   P_{2}(t') > P_{1}(t')+O_{prep},
\end{equation}
\noindent for $t\le t'\le t+T_{prep}$. Here $P_1(t)$ is the RSRP received at the UE at time $t$ from gNB 1, $P_2(t)$ is the RSRP received at the UE at time $t$ from gNB 2, $O_{prep}$ is the  handover preparation offset and $T_{prep}$ is the predefined time-to-prepare.
After completion of the handover preparation event, the next phase is handover execution. For the execution phase to begin, the UE waits for the following condition to be satisfied:

\begin{equation}\label{exec_offset}
     P_{2}(t')> P_{1}(t')+O_{exec},  
\end{equation}
\noindent for $t\le t'\le t+T_{exec}$. Here, $O_{exec}$ is the offset for handover execution phase and $T_{exec}$ is a predefined time-to-execute. 

The UE keeps on measuring $P_1(t)$ and $P_2(t)$ samples periodically in every $20$ ms time duration through synchronization signal block (SSB) bursts transmitted by gNBs \cite{iqbal2023analysis}. The RSRP measurements are stochastic in nature due to fading. For the preparation phase to be satisfied, the condition that $P_{2}(t') > P_{1}(t')+O_{prep}$ has to be satisfied for each and every sampling instances during $T_{prep}$. Similarly, for the execution phase to be satisfied, the condition that $P_{2}(t') > P_{1}(t')+O_{exec}$ has to be true for each and every sampling instance during $T_{exec}$. In Fig \ref{cho_mech}, the condition \eqref{prep_offset} is first satisfied at time $t^{'}=t_1$. But due to fading, condition \eqref{prep_offset} is violated in a subsequent sample taken at $t^{'}=t_2$ and therefore cancelling the ongoing handover preparation event. Condition \eqref{prep_offset} is again true at time $t^{'}=t_3$, and therefore starting the handover preparation phase once again. This time, condition \eqref{prep_offset} is true for all sampling instances between $t_3$ and $t_3 + T_{prep}$, therefore completing the handover preparation phase successfully at $t_3 + T_{prep}$. After handover preparation, the UE is waiting for condition \eqref{exec_offset} to be satisfied to start the handover execution phase. The condition \eqref{exec_offset} is satisfied first time at $t^{'}=t_4$, but subsequently the ongoing handover execution is cancelled at $t^{'}=t_5$ due to fading. Condition \eqref{exec_offset} is again satisfied at $t^{'}=t_6$ (restart of handover execution phase), and holds for all the sampling instances between $t_6$ and $t_6 + T_{exec}$. Therefore, the CHO process is completed successfully at $t_6 + T_{exec}$. It may be noted that due to fading, the handover process is delayed by ($t_3$-$t_1$)+($t_6$+$t_4$) time duration, and therefore increasing the handover latency ($t_6 + T_{exec}$-$t_1$) as shown in Fig. \ref{cho_mech}. Such a delay may cause RLF, and subsequent HOF.

\subsection{Motivation and our contributions}



To illustrate the effect of fading on handover latency, RLF and subsequent HOF, we have conducted a system level simulation considering the 2-gNB model shown in Fig. \ref{cho_mech}. Complete code of the simulator can be found in \cite{github}. Here we have considered standard simulation parameter settings \cite{iqbal2023analysis}: $T_{prep}=100$ ms, $T_{exec}=80$ ms, $O_{exec}=3$ dB, $O_{prep}=10$ dB. Figs \ref{nofad_rlf} and \ref{nofad_hof} show higher RLF and HOF under fading scenario (both Raleigh and Rician) compared to deterministic path-loss (no fading) scenario. This is because of the prolonged handover preparation and execution phase due to fading as shown in Fig. \ref{cho_mech}. The RSRP fluctuates more frequently in case of Raleigh compared to Rician counterpart because the Rician fading has a line of sight component making the RSRP more stable. As a result, RLF and HOF in Raleigh is higher compared to that of Rician fading. Fig. \ref{lv} shows that handover latency ($t_6 + T_{exec}-t_1$) in case of Raleigh fading is much higher compared to that of no fading case. This is because of the time wasted ($t_3-t_1$)+($t_6-t_4$) due to violation of conditions \eqref{prep_offset} and \eqref{exec_offset}. This observation is quite consistent with that of Figs \ref{nofad_rlf} and \ref{nofad_hof}, resembling that channel fading causes delayed handover. From these results it is evident that channel fading plays a crucial role in determining \emph{handover failure} and \emph{handover latency}.

\begin{figure*}[t]
    \centering
    \begin{minipage}[b]{0.3\textwidth}
    \centering
    \includegraphics[page=12,clip, trim=5cm 7.0cm 2.5cm 4cm,scale=.33]{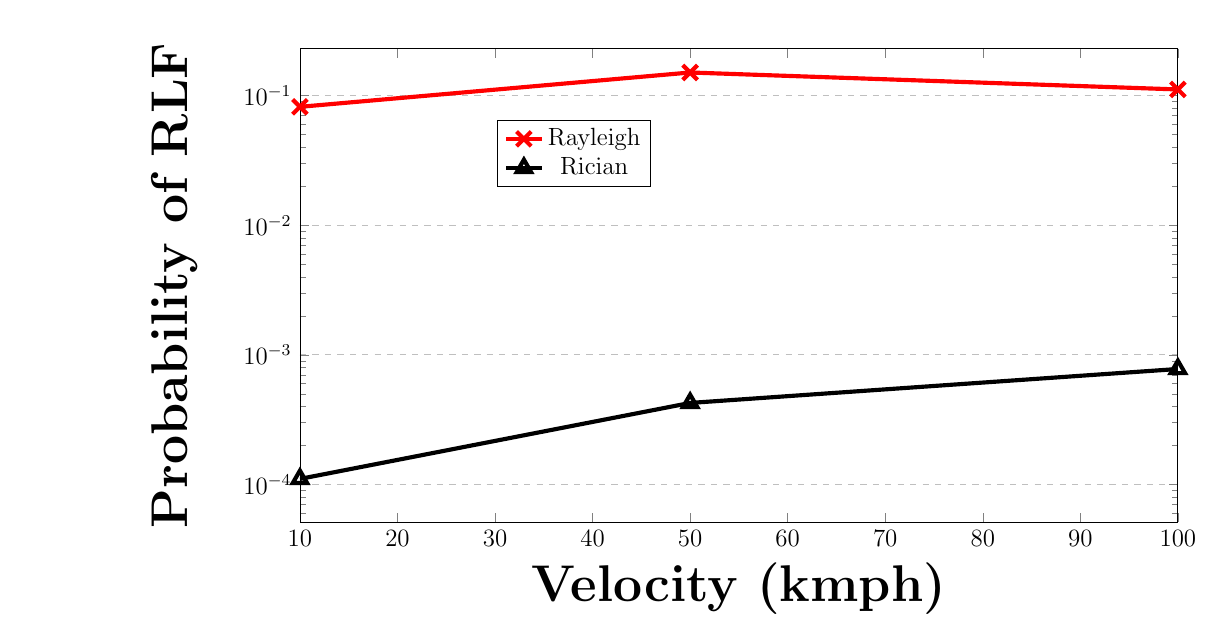}
    \caption{RLF probability vs. UE velocity}
    \label{nofad_rlf}
    \end{minipage}
  \hfill
  \begin{minipage}[b]{0.3\textwidth}
    \centering
    \includegraphics[page=13,clip, trim=5cm 7.0cm 2.5cm 4cm,scale=.33]{graphs.pdf}
    \caption{Probability of HOF vs. velocity}
    \label{nofad_hof}
    \end{minipage}
    \hfill
    \begin{minipage}[b]{0.3\textwidth}
         \centering
    \includegraphics[page=9,clip, trim=5cm 3.0cm 2.5cm 4cm,scale=.3]{graphs.pdf}
    \caption{Handover latency vs. velocity}
    \label{lv}
    \end{minipage}
\end{figure*}

For optimal operation of CHO in terms of minimized handover failure and handover latency, choosing appropriate thresholds for $O_{prep}$, $O_{exec}$, $T_{prep}$ and $T_{exec}$ is extremely important. For example, higher values of $T_{prep}$ and $T_{exec}$ may cause RLF with the serving gNB which may cause subsequent HOF. Similarly, higher values of $O_{prep}$ and $O_{exec}$ can also delay handover. {\bf Hence it is worthy to study handover latency and handover failure as a function of $O_{prep}$, $O_{exec}$, $T_{prep}$, $T_{exec}$, UE velocity and channel fading characteristics. Analytic performance evaluation of conditional handover mechanism considering all the aforesaid parameters is unprecedented in literature.} The existing studies on performance analysis are simulation based. 
In \cite{bsics_of_cho}, mobility, outage and signaling aspects of CHO in new radio (NR) systems have been explored through system level simulations. Here CHO has been shown to enhance robustness and outage gain; however improper choice of mobility parameters result in a notable increase in signaling overhead. In \cite{lee2020prediction}, a deep-learning based predictive CHO mechanism has been proposed to enhance early preparation success rates; and to reduce signaling overhead. In \cite{cho_5g_principal}, CHO has been shown to enhance the reliability of handovers (i.e., less handover failure) in 5G systems, where the impact of interference, mobility and signal quality have been explicitly considered. In \cite{10294030}, handover delay in CHO has been reduced by introducing a beam-specific measurement reporting to update Contention Free Random Access (CFRA) resources in 5G systems. The topology-aware skipping approach proposed in \cite{arshad2016handover} aims to reduce unnecessary handovers in CHO by exploiting the topological relationship between gNBs assuming static network topology. This approach may not perform optimally in dynamic network environments. All these existing studies 
\cite{bsics_of_cho}, \cite{lee2020prediction}, \cite{cho_5g_principal}, \cite{10294030}, \cite{arshad2016handover} are simulation based, and do not quantify the handover performance in terms of the mobility management parameters such as $O_{exec}$, $O_{prep}$, $T_{exec}$, $T_{prep}$ and UE velocity. Moreover, the fading characteristics of the channel has not been considered. 

In this work, our \emph{objective} is to analyse the handover performance of CHO explicitly accounting the mobility management parameters such as $O_{exec}$, $O_{prep}$, $T_{exec}$, $T_{prep}$, UE velocity and channel fading characteristics. Our \emph{contributions} can be summarized as follows:

\begin{itemize}
    \item We have characterized the CHO mechanism in terms of $O_{exec}$, $O_{prep}$, $T_{exec}$, $T_{prep}$, UE velocity and channel fading characteristics (Raleigh/Rician) based on a Markov model. From the proposed Markov model, the probability of HOF has been computed. 
    \item Extensive system level simulation has been conducted to explore the effect of $O_{exec}$, $O_{prep}$, $T_{exec}$, $T_{prep}$, UE velocity and channel fading characteristics on RLF, HOF, handover latency and handover packet loss. Moreover, results obtained from the proposed Markov model characterizing CHO has been validated against  simulation results. A special case of the Markov model resembling A3 handover (when handover execution follows handover preparation deterministically) has been validated against ns-3 simulation as well. 
\end{itemize}

\noindent {\bf Our study reveal that optimal configuration of $O_{exec}$, $O_{prep}$, $T_{exec}$ and $T_{prep}$ is actually conditional on underlying UE velocity and fading characteristics. This study will be helpful for the mobile operators to choose appropriate thresholds of the aforesaid mobility parameters under different channel condition and UE velocities.}

The rest of the manuscript is organized as follows. In section \ref{rwork}, related works has been discussed. In section \ref{sysmodel}, considered system model has been described. In section \ref{markov}, we propose the Markov model representation to characterize CHO. In section \ref{res}, both analytic and simulation results have been presented. Finally, section \ref{con} concludes the paper.

\section{Related works}\label{rwork}
In the traditional A3 condition based handover, the handover process is initiated when the following condition is met \cite{SON_handover_parameters}:
\begin{equation}\label{a3handover}
   P_{2}(t') > P_{1}(t')+Hys,
\end{equation}
for a predefined time period called time to trigger (TTT).
Here $P_1(t')$ and $P_2(t')$ are the RSRP values of the serving gNB and the target gNB respectively. Here $Hys$ denotes the hysteresis parameter. The handover is executed immediately after the above condition is satisfied. However, mmWave communication is highly susceptible to environmental losses (e.g., path loss due to high frequency, presence of obstacle etc.). Hence, the channel condition of the target gNB may change right after the handover preparation leading to failure in execution causing HOF. In CHO, the UE does not immediately execute handover to a prepared target cell. A handover takes place only after the UE is certain of the channel condition of the target gNB and thereby avoiding handover to a wrong cell. Therefore, CHO is better suited for 5G systems.

In this section, we undertake an exhaustive review and examination of pertinent research works centered around mobility management and handover performance optimization in the context of CHO.   

\subsection{Mobility management aspect of CHO}
Mobility management involves protocols and procedures in a network to facilitate the smooth movement of UEs. It is crucial for maintaining uninterrupted connectivity as the UEs switch across different access networks.
In \cite{10078238}, an enhanced mobility management mechanism has been proposed \textemdash specifically for multi panel user equipment  and hand blockage scenarios. The study used a simplified hand blockage model, focused on specific mobility scenarios, lacked real network data and emphasized on the need for further investigation in realistic scenarios. In \cite{bsics_of_cho}, authors have explored the mobility, outage and signaling aspects of CHO in NR systems, in order to  enhance robustness and outage gain. It has also been shown that improper system configuration may result in a notable increase in signaling overhead. In \cite{lee2020prediction}, a deep learning based predictive conditional handover (PCHO) has been proposed for 5G networks, aiming to enhance early preparation success rates. Despite the promise, challenges in deep-learning model implementation require careful consideration of training data quality and validation across diverse network scenarios. In \cite{8445882}, statistical blockage models have been employed to capture the impact of the hand, human body, vehicles, and to study the time scales at which mmWave signals are disrupted by blockage. \cite{iqbal2022analysis} examined the mobility performance of two multi panel UE schemes in multi-beam 5G networks \textemdash aiming to compare them with traditional UEs, optimize mobility parameters, and thereby providing insights into handover models and beam management procedures.  In \cite{cho_5g_principal}, authors addressed the challenge of unreliable handover in 5G networks, particularly at higher frequency bands, due to factors like interference, mobility and signal quality. It proposes utilization of CHO as a solution to enhance the reliability of handovers in 5G networks. However, this solution is specific to FR2. In \cite{10294030}, a beam-specific measurement reporting method has been proposed to optimize the performance of contention free random access (CFRA) during CHO in 5G networks. However, limitations arise from the exclusive focus on CFRA optimization during CHO, overlooking broader random access aspects and potential applicability constraints in diverse network scenarios. These studies rely on simulations in controlled environments, potentially limiting real-world applicability. 

\subsection{Handover performance optimization of CHO}
In the rapidly evolving landscape of 5G networks, optimizing the efficiency and reliability of handovers is crucial. Towards that end, separating handover preparation and execution in CHO emerges as key strategy. Researchers prioritize signal strength, random access and resource optimization to enhance 5G network mobility, emphasizing shared concerns about data quality and signaling balance. For instance, \cite{amirijoo20143gpp} proposed an approach for the self-optimization of the random access channel (RACH) which is used to mitigate interruptions during handovers and enhance reliability. In this case, effectiveness could be constrained by the complexity and variability of real-world network conditions. Similarly \cite{stanczak2023conditional} presented an approach which involves beam-specific measurement reporting that can lead to contention free random access (CFRA) resource updating prior to CHO execution. This approach is aimed at maximizing the use of CFRA during mobility, leading to faster and successful handover. It requires additional signaling and coordination among cells, potentially increasing overhead and resource consumption. Moreover, the topology-aware skipping approach proposed in \cite{arshad2016handover} aims to enhance CHO efficiency in cellular networks by intelligently skipping unnecessary handovers based on the topological relationship between neighbouring eNBs. This approach may not perform optimally in dynamic network environments where the assumed static network topology may not accurately reflect real-time conditions. The proposed adaptive UE cell clustering scheme in \cite{joud2018user} enhances CHO efficiency by adjusting cell configurations based on real-time mobility patterns and ensuring seamless connectivity during transitions. However, the effectiveness of this approach hinges on accurate mobility state estimation. So, any inaccuracies or delays in this estimation may lead to non-ideal handover decisions. 

In \cite{9221410}, authors addressed the rising HOFs in urban 5G areas. A Machine Learning (ML) based beam-specific algorithm has been proposed to predict and initiate handovers before RLF occurs; aiming for a reduction in HOF rates and enhanced quality of experience (QoE). However, the accuracy of beam measurements, unforeseen mobility, UE diversity and energy consumption impact practical applicability. \cite{manalastas2023machine} addressed inter-frequency handover failures in 5G systems by introducing an accurate ML model leveraging RSRP data of base stations and interferes \textemdash aiming to improve handover reliability. However, there are limitations in reducing inter-frequency handover failures, including concerns about model generalization, challenges in real-world deployment and sensitivity to dynamic network conditions. In \cite{gharsallah2019sdn}, a software defined handover solution leveraging software defined networking (SDN) has been proposed to optimize handover in ultra dense 5G networks, resulting in reduced handover failure and handover delay. \cite{wang2023investigating} investigates the impact of TTT on handover performance in ultra-dense 5G networks. 

The existing simulation based studies on CHO have not evaluated handover latency and handover failure as a function of mobility parameters such as $O_{exec}$, $O_{prep}$, $T_{exec}$, $T_{prep}$, UE velocity and channel fading characteristics. In this study, an effort has been initiated to explore handover latency and handover failure considering the aforesaid system parameters.

\section{System model and assumptions} \label{sysmodel}
We consider a 5G network with $B$ number of gNBs providing ubiquitous coverage to $N$ UEs. The set of gNBs is denoted by $\mathcal{B}=\{1,\,2,\cdots,\,B\}$. The set of UEs is denoted by $\mathcal{U}=\{1,\,2,\cdots,\,N\}$. The layer 3 (L3) RSRP measurements are used for all handover related decisions \cite{iqbal2022analysis}. In the subsequent subsections, we describe the channel model of gNB$\rightarrow$UE channels. Then we describe the RLF and HOF models. All the notations used in this study are summarized in Table \ref{symtab}. 

\begin{table}[h]
    \centering
    \caption{Symbol table}
    \begin{tabular}{|c|c|}
    \hline
         \textbf{Symbol}  & \textbf{Meaning} \\
         \hline
         $P_1(t)$ & RSRP value at time $t$ from gNB $1$.\\
         \hline
         $P_{2}(t)$ & RSRP value at time $t$ from gNB $2$.\\
         \hline
         $T_{prep}$ & Time for handover preparation.\\
         \hline
         $O_{prep}$ & Power offset for handover preparation.\\
         \hline
         $T_{exec}$ & Time for handover execution.\\
         \hline
         $O_{exec}$ & Power offset for handover execution.\\
         \hline
         $\mathcal{B}$ & Set of gNBs.\\
         \hline
         $\mathcal{U}$ & Set of UEs.\\
         \hline
         $P^T_c$ & Transmit power of the gNB $c$.\\
         \hline
         $P^r_{ci}$ & Received power of the gNB $c$ at UE $i$. \\
         \hline
         $\mathbf{h}_{ci}$ & Channel Coefficient Vector from gNB $c$ to UE $i$. \\
         \hline
         $\mathcal{K}$ & Rician factor. \\
         \hline
         $\gamma_i$ & Signal to Interference and noise ratio (SINR) of UE $i$. \\
         \hline
         $RLF$ & Radio Link Failure. \\
         \hline
         $HOF$ & Handover Failure. \\
         \hline
         $P$ & Transition Matrix. \\
         \hline
         $p(i,\;j)$ & Transition probability from state $i$ to $j$. \\
         \hline
         $\pi_{HOF}$ & Probability HOF.\\
         \hline
    \end{tabular}
    \label{symtab}
\end{table}

\subsection{Channel model}
Let at time $t$, UE $i$ be served by gNB $c\in\mathcal{B}$ with the transmit power $P^T_c$. Let UE $i$ is at a distance $d_{ci}$ from gNB $c$. The received power $P^r_{ci}(t)$ at UE $i$ from gNB $c$ is computed as follows \cite{channel_coeff}:
\begin{equation} \label{pathloss}
    P^r_{ci}(t)=|\mathbf{h}^H_{ci}\sqrt{d_{ci}^{-\alpha}}|^2P^T_c.
\end{equation}
\noindent Here $\mathbf{h}_{ci}\in \mathbb{C}^{M\times 1}$ is the channel coefficient vector for the gNB $\rightarrow$ UE channel and $\alpha$ is the pathloss exponent. The value of the $l^{th}$ component of the channel coefficient vector for Rician fading, denoted by $h^l_{ci}$ is given by a complex Gaussian distribution $\mathcal{CN}(\mu,\sigma^2)$ \cite{rician_fading}. Denoting by $\mathcal{K}$ as the Rician factor, the mean $\mu$ and standard deviation $\sigma$ of the complex Gaussian is given by $\mu=\sqrt{\frac{\mathcal{K}}{2(\mathcal{K}+1)}}$ and $\sigma=\sqrt{\frac{1}{2(\mathcal{K}+1)}}$ respectively. For $\mathcal{K}=0$, $\mathbf{h}_{ci}$ represents Rayleigh fading.

The signal-to-interference plus noise ratio (SINR) $\gamma_i$ measured by UE $i$ when connected to gNB $c$ is given as below:
\begin{equation}
    \gamma_i=\frac{P^r_{ci}(t)}{\displaystyle\sum_{j\ne c\, j\in\mathcal{B}}|\mathbf{h}^H_{ji}\sqrt{d_{ji}^{-\alpha}}|^2P^T_j+\omega^2}
\end{equation}
\noindent Here $\omega^2$ is additive white Gaussian noise (AWGN) power at the UE.

\subsection{RLF and HOF}
Although the detailed implementation of RLF varies from chipset-to-chipset \cite{share}, in this work, we stick with the widely used and accepted SINR based definition of RLF 
\cite{survey1}, \cite{rlfdef} \cite{lopez}:\\
During communication, $\gamma_i$ may drop below a predefined threshold $\gamma_{out}$ by the mobile operator for $N310$ consecutive frames. In such a case,  $T310$ timer (also known as RLF timer) is started. During the RLF timer, if the SINR exceed another threshold $\gamma_{in} (>\gamma_{out}$) known as in-synch event, the ongoing RLF event is cancelled. Otherwise, RLF is declared as soon as the RLF timer expires.  Accordingly, the connection of the UE with the serving gNB is lost. \\
The occurrence of RLF during a handover process is one of the prime causes of handover failures \cite{lopez}. In CHO, for the handover preparation phase to complete, the RSRP at the UE must satisfy condition \eqref{prep_offset} for all sampling instances during $T_{prep}$. Similarly, after handover preparations are done, condition \eqref{exec_offset} must be satisfied for all sampling instances during $T_{exec}$ for the handover execution phase to be completed.  Violation of these conditions due to fading results in restarting the corresponding phase, leading to delayed handover (shown in Fig. \ref{cho_mech}). In case of delayed handover, $\gamma_i$ may keep on decreasing which may cause RLF  during handover preparation/execution phase. Such a RLF causes HOF as well as shown in Fig. \ref{rlf_e}.

\begin{figure}[h]
    \centering
    \includegraphics[scale=0.4]{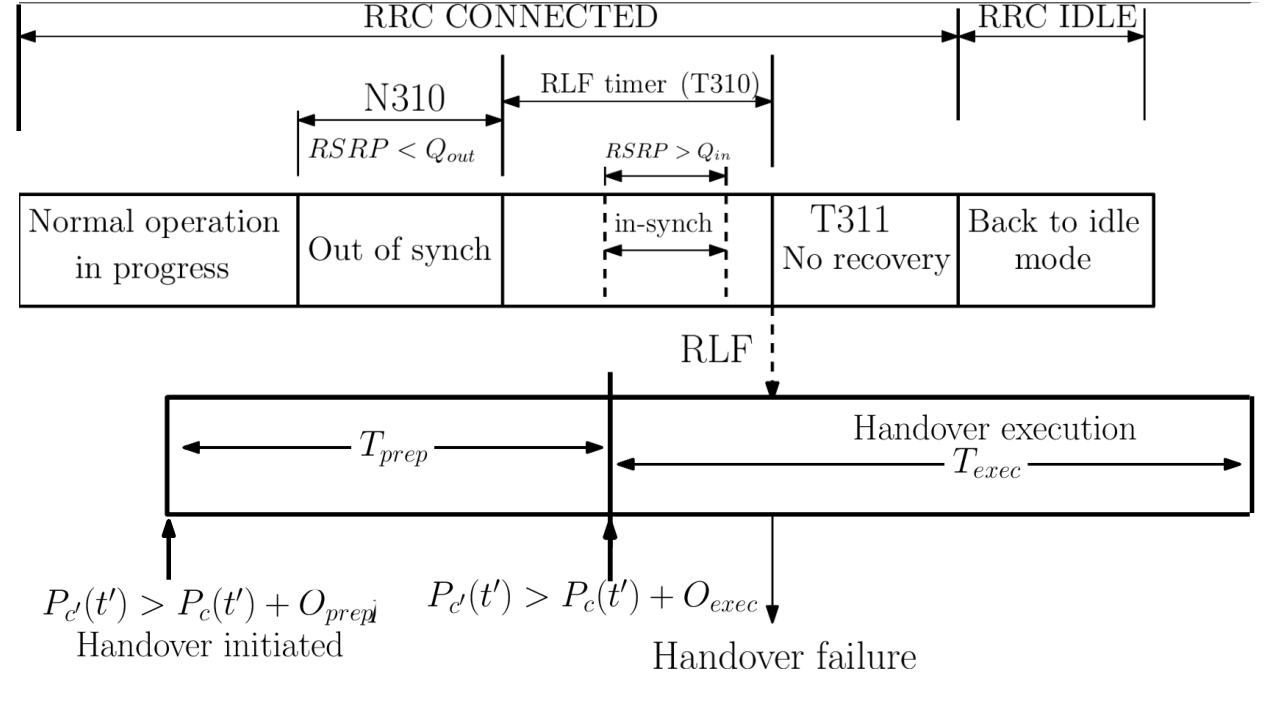}
    \caption{RLF during handover execution}
    \label{rlf_e}
\end{figure}

\section{Performance model for conditional handover} \label{markov}

In order to analyze the performance of CHO, we model the different stages of a CHO process as a discrete-time Markov chain. The Markov model considers the effect of $O_{exec}$, $O_{prep}$, $T_{exec}$, $T_{prep}$, UE velocity and channel fading characteristics. In the subsequent subsections, we describe the states of the Markov chain, computation of transition probabilities, and finally probabilities of HOF. 

In NR systems, the UE measures the RSRP using the synchronization signal block (SSB) bursts transmitted by each gNB. The SSB periodicity is $20$ ms \cite{iqbal2023analysis}.  Therefore, sampling interval between two successive RSRP measurements has been considered to be $20$ ms, i.e., $T_{sample}=20$ ms. We assume that time is discretized into slots where each slot has a duration equal to $T_{sample}$. In each state of the Markov chain,  RSRP from all gNBs are measured at the beginning of each time slot. Based on these measurements, the UE transitions into a new state from its current state. The resulting Markov model is shown in Fig. \ref{cho1}.

\subsection{States of the Markov chain:}
In this section, we define different states of the Markov chain. Let us consider that $S$ represents the set of all states of the Markov chain. Elements of the set $S$ are defined below:

\begin{figure*}[t]
    \centering
    \includegraphics[scale=0.8]{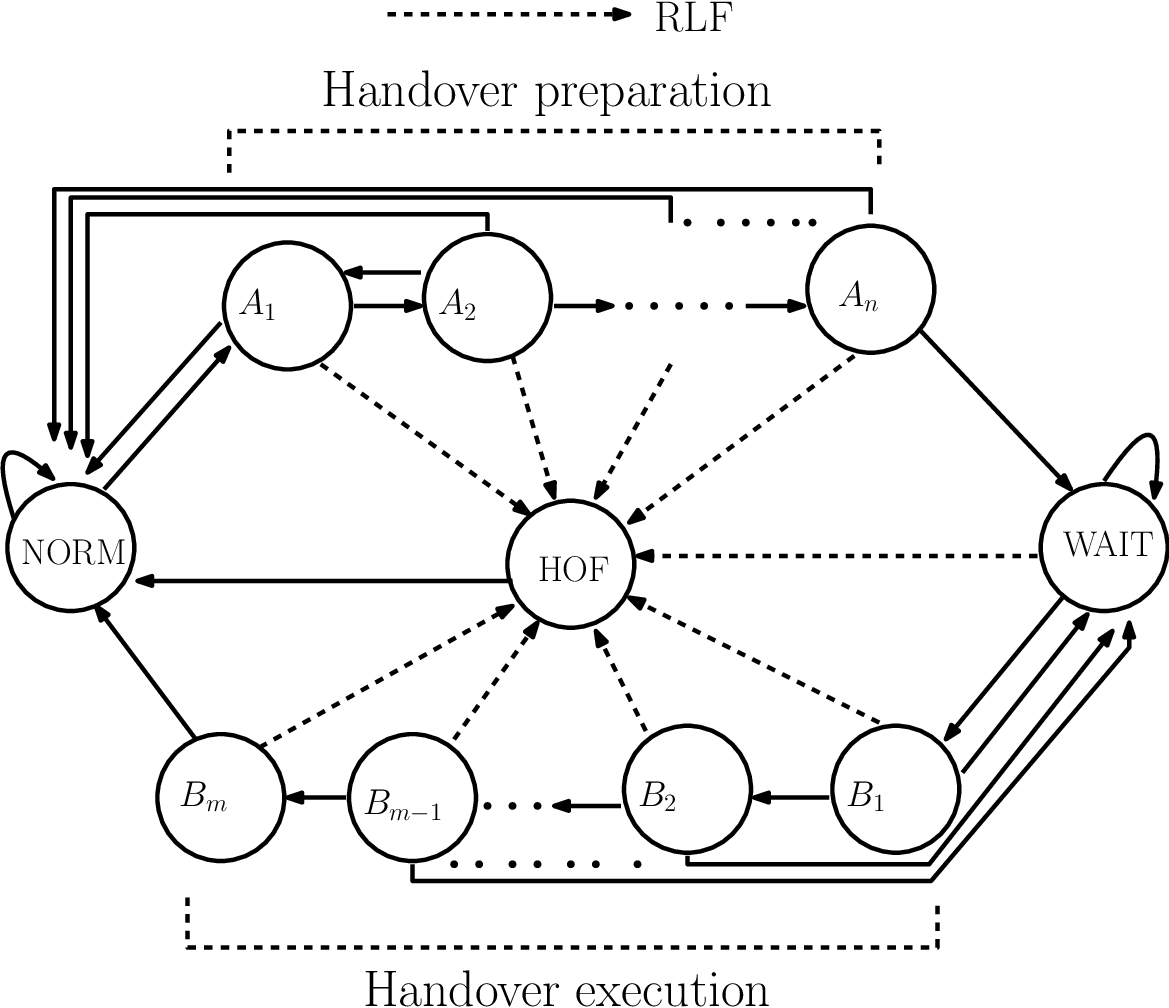}
    \caption{Markov chain characterizing conditional handover process}
    \label{cho1}
\end{figure*}

\textbf{State $NORM$}:
State $NORM$ represents normal communication with the gNB, i,e., there is no underlying RLF or handover. 

\textbf{States $A_1$ to $A_n$}: After the handover has been initiated at some time $t'$, the handover preparation process begins. For the handover preparation process to be completed, the RSRP at the UE must satisfy condition \eqref{prep_offset}  for all sampling instances during $T_{prep}$. During this process, the UE sends RSRP measurements to the current gNB at discrete time steps. Hence, we represent the handover preparation process, using $n$ discrete states. Here $n=\lceil \frac{T_{prep}}{T_{sample}} \rceil$, where $T_{sample}$ represents the sampling interval. The handover preparation phase is completed at $A_n$.

Given any state $A_i$, where $1\le i \le n$ there is a transition to state $A_{i+1}$ if and only if condition \eqref{prep_offset} is satisfied. When the state $WAIT$ is reached, it signifies that the handover preparations are done, and the UE is waiting for handover execution. In case,  condition \eqref{prep_offset} is violated in any of the states from $A_1$ to $A_n$, there is a transition back to state $NORM$.  

\textbf{State $WAIT$}: $WAIT$ represents the state that handover preparation has been completed and the UE is waiting for the beginning of the handover execution phase. The Markov chain deterministically switches from $A_n$ to $WAIT$ in the next sampling instance.

\textbf{States $B_1$ to $B_m$}: States $B_1$ to $B_m$ constitutes handover execution phase. Here $m=\lceil \frac{T_{exec}}{T_{sample}} \rceil$. After the UE reaches the state $A_m$, the handover must now be executed. For the handover execution to completed, the UE must satisfy condition $P_2(t')> P_1(t')+O_{exec}$ in every sampling instance during $T_{exec}$. Just like the handover preparation phase, we represent the handover execution phase $m$ discrete states: $B_1, B_2, \ldots B_m$. 

Given any state $B_j$, where $1\le j \le m$ there is a transition to state $B_{j+1}$ if and only if condition \eqref{exec_offset} is satisfied. When the state $B_{m}$ is reached, it signifies that the UE is ready to handover to the target gNB. If condition \eqref{exec_offset} is violated when the UE is in any of the states $B_1, B_2, \ldots,B_m$, there is a transition back to state $WAIT$ to restart the handover execution process again. From $B_m$ a transition happens to the state $NORM$ if and only if there is no RLF, and the UE has undergone a successful handover.

Thus $A_i \rightarrow NORM$ ($1 \leq i \leq n$) and $B_j \rightarrow WAIT$ ($1 \leq i \leq m$) transitions capture the effect of channel fading characteristics on CHO process.

\textbf{State $HOF$:} This state is reached when a HOF occurs due to RLF during the handover preparation or execution phase. Hence, there is a nonzero probability of transition to this state from any of the previous states. After the occurrence of an RLF the UE connects to the nearest gNB and normal communication resumes again. This is signified by a transition from $HOF$ to $NORM$ with probability one.

\subsection{Computing transition probabilities:}
The transition probability matrix $P\in \mathbb{R}^{(n+m+3)\times (n+m+3)}$ is a function of UE velocity and channel fading characteristics. Here $p(i,j)$ is the $(i,j)$th entry of the matrix $\mathbb{R}$, representing the probability of transitions from state $i$ to state $j$. $A_i \rightarrow NORM$ ($1 \leq i \leq n$) and $B_j \rightarrow WAIT$ ($1 \leq i \leq m$) transitions capture the effect of channel fading characteristics, whereas the UE velocity determines the correlation between the states $A_i$s and $B_j$s. To keep the model simple and tractable, in this study, $p(i, j)$ has been computed from simulation traces as follows:

\begin{equation} \label{transition}
    p(i,j)=\frac{Number\; of\; transitions \; from\; state\; i \; to\; state\; j}{N},
\end{equation}

\noindent where $N$ is the total number of states observed; It may be noted that $p(i, j)$ is a function of UE velocity and channel fading characteristics (Raleigh/Rician). 

\subsection{Computing the steady state probability}
In this subsection, we model HOF probability as the steady state probability of residing the Markov chain in state HOF. Subsequently, we first prove that the proposed finite state Markov chain is \emph{irreducible} and \emph{aperiodic}. Then, solving the \emph{Chapman-Kolmogorov} equations, we compute the steady state probability of residing in the state HOF, i.e, the probability of handover failure. 

\begin{lemma}
The Markov chain defined by $P$ is aperiodic and irreducible, if both $r_{ss} (k)$ and $r_{ss} (k+1)$
are non-zero for all $s \in S$, where $r_{ss} (k)$ is the probability that starting from state $s$, the Markov process transitions back to state $s$ in exactly $k$ steps. 
\end{lemma}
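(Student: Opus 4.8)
The plan is to establish the two properties from different sources: \emph{irreducibility} from the explicit transition structure of the chain described in the model, and \emph{aperiodicity} from the number-theoretic content of the stated hypothesis.

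For irreducibility I would show that the whole state space forms a single communicating class, using the $HOF$ state as a universal hub. By the model, every non-$HOF$ state carries a nonzero-probability edge into $HOF$, and $HOF \to NORM$ occurs with probability one; hence every state reaches $NORM$ in at most two steps. Conversely, $NORM$ reaches every other state along the preparation chain $NORM \to A_1 \to \cdots \to A_n$, the deterministic step $A_n \to WAIT$, the execution chain $WAIT \to B_1 \to \cdots \to B_m$, and the edges into $HOF$. Composing these, for any ordered pair of states $s, s'$ there is a positive-probability path from $s$ to $NORM$ and thence from $NORM$ to $s'$, so every state is reachable from every other and the chain is irreducible.

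For aperiodicity I would invoke the hypothesis directly. Recall that the period of a state $s$ is $d(s)=\gcd\{k : r_{ss}(k)>0\}$. Since the hypothesis guarantees, for each $s$, two \emph{consecutive} return times $k$ and $k+1$ with $r_{ss}(k)>0$ and $r_{ss}(k+1)>0$, the period must divide $\gcd(k,k+1)=1$, forcing $d(s)=1$. As this holds for every $s$, the chain is aperiodic; note that once irreducibility is in hand, aperiodicity of even a single state would already suffice, because the period is a class property, so the hypothesis is in fact stronger than strictly needed.

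The step I expect to carry the most weight is the irreducibility argument, since the stated hypothesis is essentially an aperiodicity condition and does not by itself rule out the chain splitting into several communicating classes. The claim therefore genuinely rests on the specific edges of the model—most crucially the probability-one return $HOF \to NORM$ together with the reachability of every state from $NORM$—so I would take care to justify each required edge from the model's description rather than from the return-time condition. The aperiodicity half, by contrast, reduces to the one-line observation that $\gcd(k,k+1)=1$ once the two consecutive return times are available.
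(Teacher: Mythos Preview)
Your proposal is correct and follows essentially the same two-part strategy as the paper: irreducibility from the explicit edge structure with $NORM$ as the hub (reached via $HOF$), and aperiodicity from the $\gcd(k,k+1)=1$ observation. The paper's own proof differs mainly in that it does not stop at the conditional: it goes on to \emph{exhibit} the required pairs of consecutive-length return cycles explicitly for the special states $NORM$, $HOF$, and $B_m$ (lengths $3,4$; $3,4$; and $n{+}m{+}1,\,n{+}m{+}2$, respectively), in effect verifying that the hypothesis actually holds for this particular chain. Your treatment is cleaner as a proof of the lemma \emph{as stated}; the paper's version is more constructive and doubles as a check that the lemma applies.

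One small caution on your irreducibility step: the claim that ``every non-$HOF$ state carries a nonzero-probability edge into $HOF$'' is not literally supported by the model for all states. In particular, the paper treats $A_n \to WAIT$ as deterministic, and its own cycle for $NORM$ goes $NORM \to A_1 \to HOF \to NORM$ rather than $NORM \to HOF$ directly. Your overall conclusion survives, but the ``at most two steps to $NORM$'' count should be relaxed, and you should route $NORM$ and $A_n$ through an adjacent $A_i$ or $WAIT$ before reaching $HOF$, exactly as you anticipate in your closing remark about justifying each edge from the model.
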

\begin{proof}
As per definition, Markov chain defined by $P$ is said to be aperiodic if, the greatest common divisor (GCD) of the path lengths of $s \rightarrow s$ paths is $1$ for all $s \in S$ \cite{markov_book}.  

For state $s'$ ($s' \in S \setminus \{NORM,\;B_m,\;HOF\}$), we have $r_{s's'}(k)>0$ and $r_{s's'}(k+1)>0$ making the gcd of all the path lengths transitioning back to $s'$ as $1$. This happens because, when the UE is in any arbitrary state $s'$, an RLF causing a HOF can occur resulting in a transition to the state $HOF$, and ALL $s'$ is reachable from $NORM$. Let us assume that there is a  $s' \rightarrow s' $ path of length $k$ via $s_1$ and $s_2$,  where $s_1, s_2 \in S \setminus \{NORM,\;B_m,\;HOF\}$; and length of the  $s_1 \rightarrow s_2$ path is $1$. HOF is reachable from both $s_1$ and $s_2$. Hence, there exist a path of length $k-1$ from $s' \rightarrow s'$ via $s_1$ excluding $s_2$, i.e., the path $s' \rightarrow s_1 \rightarrow HOF \rightarrow NORM \rightarrow s'$. 

There exist two paths of transition from $NORM\rightarrow NORM$ as follows: (1) $NORM\rightarrow A_1\rightarrow HOF \rightarrow NORM$ with path-length $3$, and (2) $NORM\rightarrow A_1 \rightarrow A_2\rightarrow HOF \rightarrow NORM$ with path-length $4$. Therefore, $r_{NORM\;NORM}(3)>0$ and $r_{NORM\;NORM}(4)>0$ making the GCD of length of all paths from $NORM\rightarrow NORM$ to be $1$.

There exist two paths of transition from $HOF\rightarrow HOF$ as follows: (1) $HOF\rightarrow NORM \rightarrow A_1\rightarrow HOF$ with pathlength $3$, and (2) $HOF\rightarrow NORM \rightarrow A_1 \rightarrow A_2\rightarrow HOF$ with pathlength $4$. Therefore, $r_{HOF\;HOF}(3)>0$ and $r_{HOF\;HOF}(4)>0$ making the GCD of length of all paths from $HOF\rightarrow HOF$ to be $1$.

There exist two paths of transition from $B_m\rightarrow B_m$ as follows: (1) $B_m\rightarrow NORM\rightarrow A_1\;\ldots, A_n\rightarrow WAIT\rightarrow B_1\;\ldots B_m$ with path length $n+m+1$, and (2) $B_m\rightarrow HOF\rightarrow NORM\rightarrow A_1\;\ldots, A_n\rightarrow WAIT\rightarrow B_1\;\ldots B_m$ with path length $m+n+2$. Therefore, $r_{B_m}\;B_m(n+m+1)>0$ and $r_{B_{m} B_{m}}(n+m+2)>0$ making the GCD of length of all paths from $B_m\rightarrow B_m$ to be $1$.

Therefore, the GCD of the path-lengths $s \rightarrow s$ for all $s \in S$ is $1$. Hence we conclude that the Markov chain defined by $P$ is \emph{aperiodic}. Also, all the states are reachable from $NORM$, and $NORM$ is reachable from all other states. Therefore, there exists a $k$ such that  $r_{ss'}(k)>0$ for all $s$ and $s'$ belonging to $S$. Hence, the considered Markov chain is \emph{irreducible} as well. 
\end{proof}

In the next subsection, we compute $\pi_{HOF}$ the steady state probability of remaining in state HOF by solving Chapman-Kolmogorov equation.

\subsection{Computing $\pi_{HOF}$}
Let the steady-state probability vector of the above-defined Markov chain be $\mathbf{\pi}\in [0,1]^{1 \times (n+m+3)} $. Then, the Chapman–Kolmogorov equations are:
 \begin{eqnarray}
  \mathbf{\pi}=\mathbf{\pi}P, \\
 \displaystyle\sum_{s \in S}\pi_s=1.    
 \end{eqnarray}
By solving this system of linear equations, we can find the steady-state probabilities as follows:
\begin{equation}\label{solution3}
     \pi_{HOF}=1-\phi \times \pi_{NORM}-\delta \times \pi_{WAIT}
\end{equation}
\noindent where
\begin{eqnarray*}
    \pi_{NORM}=\frac{1}{\phi -\alpha + (\frac{p(A_n, WAIT)a_{n}}{\beta})(p_{B_{m} NORM}b_{m}-\delta)}\label{solution1}\\
    \pi_{WAIT}=\frac{1}{(\alpha-\phi)(\frac{\beta}{p(A_n, WAIT)a_{n}})-p(B_m, NORM) b_{m}+\delta},\label{solution2}\\
     a_{i}=p(A_{i-1}, A_{i}) p(A_{i-2},A_{i-1})\cdots p(NORM ,A_1),\\
     b_{i}=p(B_{i-1},B_i)p(B_{i-2},B_{i-1})\cdots p(WAIT, B_1),\\
     \alpha= p (NORM, NORM)+\displaystyle\sum_{i=2}^{n}a_ip(A_i, NORM)-1,\\
     \beta= p (WAIT, WAIT)-1+\displaystyle\sum_{i=1}^{m-1}b_ip (B_i, WAIT),\\
      \phi=1+ a_{NORM} + \displaystyle\sum_{i=2}^{n}a_i,\\
       \delta=1+\displaystyle\sum_{i=1}^{m}b_i,\\
       \lambda=b_{1}.
\end{eqnarray*}

\paragraph*{Modelling A3 based handover}: The traditional A3 condition based handover mechanism can be modeled as an special case of the proposed Markov chain characterizing CHO mechanism. The proposed Markov chain can be modified to model  A3 handover mechanism by adjusting the mobility parameters in one of the following two ways: (1) $O_{prep}=Hys, \;O_{exec}=0,\; T_{prep}=TTT,\; T_{exec}=0$ OR (2) $O_{prep}=0,\; O_{exec}=TTT,\; T_{prep}=0,\; T_{exec}=TTT$. In the first case, the states $B_i$ for all $1\leq i\leq m$ cease to exist, and there is direct transition from $WAIT\rightarrow NORM$ with probability $1$. That means the preparation event boils down to the A3 event, and the handover is executed immediately after the A3 condition is satisfied. On the other hand, for the second case, the states $A_1$ to $A_n$ cease to exist, and the transition $NORM \rightarrow WAIT$ happens with probability $1$. Here condition \eqref{exec_offset} for handover execution boils down to the A3 condition, i.e., condition \eqref{a3handover}. Accordingly, probability of handover failure for A3 handover can be computed from equation \eqref{solution3}.

In the next section, we evaluate the effect of different mobility parameters on CHO performance in terms of handover latency, handover packet loss and handover failure.

\section{Results and discussions}  \label{res}
In this section, we evaluate the effect of different mobility parameters such as $O_{exec}$, $O_{prep}$, $T_{exec}$, $T_{prep}$ and UE velocity on handover performance of CHO under different fading scenarios. We have considered probability of radio link failure, probability of handover failure, handover latency and handover packet loss rate as performance evaluation metrics. Results obtained from the proposed analytic framework has been validated against simulation results as well. Moreover, an special case of the proposed Markov model resembling A3 handover has been validated against ns-3 simulations.

\subsection{Base-case validation using ns-3}
The A3-based handover can be visualized as a special case of the CHO where: $O_{prep}=Hys, \;O_{exec}=0,\; T_{prep}=TTT,\; T_{exec}=0$; the states $B_i$ for all $1\leq i\leq m$ cease to exist, and there is direct transition from $WAIT\rightarrow NORM$ with probability $1$, i.e., $p(WAIT, B_1)$, $p(B_1, B_2)$, $\ldots$, $p(B_{m-1}, B_m)$, $p(B_m, NORM)$ is set to $0$. On the other hand, $p(WAIT, HOF)$ and $p(HOF, NORM)$ are set to $1$. Accordingly, probability of handover failure for A3 handover is computed from equation \eqref{solution3}.

\begin{table*}[t]
\caption{Parameter set used for the lena-x2-handover-measures scenario in ns-3 for simulation}
    \centering
    \begin{tabular}{|c|c|}
    \hline
    Parameter & Description \\
    \hline
     Mobility & ConstantVelocityMobilityModel (default)\\
     \hline
     Pathloss model & FriisPropagationLossModel with NakagamiPropagationLossModel \\
     \hline
     Transmit power of eNB & 40 dBm\\
     \hline
     Traffic model & Constant Bit Rate (CBR) traffic for data streaming (default). \\ 
     \hline
      Handover algorithm & A3 RSRP Handover Algorithm \\ 
      \hline
      Frequency & $2.12$ Ghz\\
      \hline
      Operating system & Ubuntu 22.04.3 LTS\\
      \hline
      Processor & 12th Gen Intel® Core™ i7-12700\\
      \hline
      Core & 20\\
      \hline
      GPU & AMD® Radeon rx 640\\
      \hline
      RAM & 32 GB\\
      \hline    
    \end{tabular}
    
    \label{ns3_tab}
\end{table*}

\begin{figure}[htbp]
    \centering
    \includegraphics[page=8,clip, trim=5cm 7.0cm 2.5cm 4cm,scale=.36]{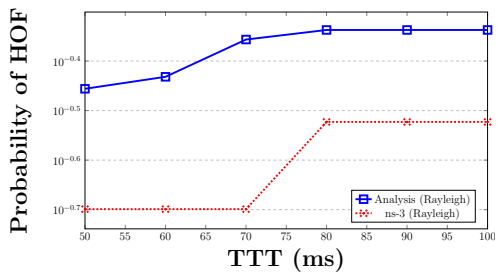}
    \caption{Probability of HOF vs. TTT (Hys=11dB)}
    \label{ns3}
\end{figure}

We set up the following ns-3 experiment to measure handover failure probability: the network scenario consists of three LTE evolved Node Bs (eNBs) placed $500$ meters apart from each other on a straight line (say eNB 1, eNB 2 and eNB 3). Initially, the UE is associated with eNB 1. The UE moves from eNB 1 to eNB 3 through a linear trajectory with a constant velocity of $22$m/s. The ``lena-x2-handover-measures" example of LTE LENA module has been modified to design the experiment. We have added additional trace sinks to monitor and record RLFs, successful handovers and HOFs. Detailed parameter settings used for the ns-3 simulation are depicted in Table \ref{ns3_tab}. {\bf In order to closely resemble the behavior of millimeter waves the carrier frequency has been set to $2.12$ Ghz by modifying the downlink E-UTRA absolute radio frequency channel number (EARFCN).} The downlink and uplink EARFCNs have been set to $100$ and  $18100$ respectively. The EARFCNs have been set using LteEnbNetDevice::GetTypeId() function of lte-end-net-device.cc module.  We have used a customized Python script to run the ns-3 source code for different values of TTT and record occurrence of HOF from the trace values generated by ns-3. The source code along with the Python script has been made available on GitHub 
\footnote{GitHub link: \url{https://github.com/Handover-ns3-works/ns-3-dev-git/blob/master/scratch/handover-ns3-works/rlf_runner.py}}. 

Probability of HOF shows an increasing trend with TTT (shown in Fig. \ref{ns3}). This result has been obtained from $1000$ independent run of the simulator. To compute the transition probabilities, the value of $N$ has been set to $1000$ in equation \eqref{transition}. HOF shows increasing trend with TTT because longer TTT leads to increased RLF. Both analysis and simulation results show similar trend. Difference between analytic and simulation results is due to the fact that the analytic model do not consider the HOF cases due to random access channel (RACH) failure, which is considered by full stack simulator as ns-3.

To the best of authors' knowledge, CHO is not implemented in ns-3. Hence, for further exploration of handover performance of CHO mechanism, we have developed a system level simulator in Python. Source code of the developed simulator is available in \cite{github}. In the next subsection, we first present the simulation set-up. Then we describe both the analytic and simulation results. 

\subsection{Simulation setup}
We have considered $8$ NR gNBs deployed across a $500\times1000$ square meters rectangular area. The gNBs are serving $20$ UEs. The minimum distance between two gNBs is $50$ meters, and the maximum distance is $500$ meters. Initially, the UEs are positioned randomly in the region. During the simulation, the UEs are moving according to random way point mobility model with a uniform random velocity between $0$ km/h and $50$ km/h. The NR eNBs have a carrier frequency of $28$ GHz with a $100$ MHz bandwidth. The transmitting power of gNBs has been set to $40$ dBm \cite{iqbal2023analysis}. The Rician factor has been set to $3$ dB. The received power at UEs have been computed using equation \eqref{pathloss}, where $\alpha=2$. The Gaussian noise figure is $-114$ dBm. In NR systems, the UE measures the RSRP using the synchronization signal block (SSB) bursts transmitted by each gNB. The SSB periodicity is $20$ ms \cite{iqbal2023analysis}.  Therefore, sampling interval between two successive RSRP measurements has been considered to be $20$ ms, i.e., $T_{sample}=20$ ms. Default values of the mobility parameters are as follows \cite{bsics_of_cho} \cite{8445882} \cite{cho_5g_principal}: $O_{exec}$=$6$ dB, $O_{prep}$= $1$ dB, $T_{prep}$=$100$ms and $T_{exec}$=$80$ms. 

\subsection{Simulation results}
\begin{figure}[htbp]
    \centering
    \includegraphics[page=2,clip, trim=5cm 7.0cm 2.5cm 4cm,scale=.36]{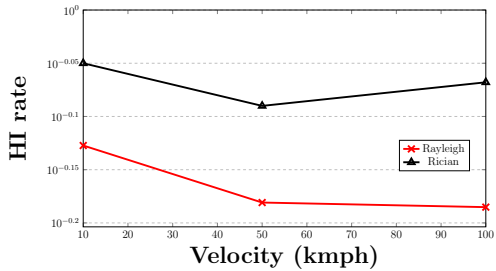}
    \caption{Handover initiation rate (in log scale) vs. velocity [$\mathbf{T_{prep}=100\;ms}$, $\mathbf{T_{exec}=80\;ms}$, $\mathbf{O_{exec}=6}$ dB, $\mathbf{O_{prep}=1}$ dB]}
    \label{hov}
\end{figure}

Fig. \ref{hov} depicts the effect of velocity on the handover initiation (HI) rate. The HI rate is defined as the frequency of entering in handover preparation phase. Here velocity varies uniformly from $10$ kmph to $100$ kmph with a step of $20$ kmph. For the case of Rayleigh fading, we observe a decrease in HI rate as the velocity increases. This is because, with an increase in velocity, the UE leaves the coverage of the serving gNB frequently before condition \ref{prep_offset} is satisfied for $T_{prep}$ period of time. Hence, the HI rate also decreases. In case of Rician fading, the HI rate remains at a stable value for all velocities. This is because the Rician channel has a direct Line of sight (LoS) link between the gNB and the UE. This improves the coverage of the gNB because the RSRP at the UE is stronger in comparison to that for Rayleigh fading case. As a consequence, the UE does not leave the coverage of the serving gNB during the $T_{prep}$ period. The HI rate do not decrease. {\bf From this result, we conclude that handover is initiated more frequently in case of Raleigh fading as compared to that of Rician fading.}   

\begin{figure}[htbp]
    \centering
    \includegraphics[page=1,clip, trim=5cm 7.0cm 2.5cm 4cm,scale=.36]{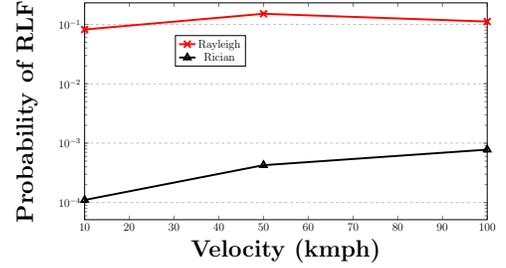}
    \caption{Probability of RLF (in log scale) vs. velocity  [$\mathbf{T_{prep}=100\;ms}$, $\mathbf{T_{exec}=80\;ms}$, $\mathbf{O_{exec}=6}$ dB, $\mathbf{O_{prep}=1}$ dB]}
    \label{rlfv}
\end{figure}

\begin{figure}[htbp]
    \centering
    \includegraphics[page=3,clip, trim=5cm 7.0cm 2.5cm 4cm,scale=.36]{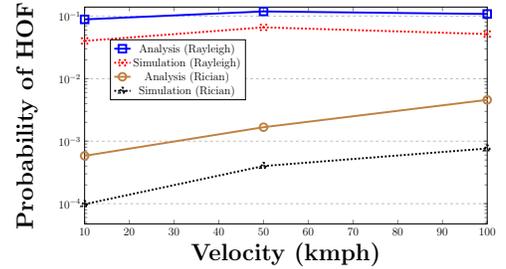}
    \caption{Probability of HOF (in log scale) vs. velocity [$\mathbf{T_{prep}=100\;ms}$, $\mathbf{T_{exec}=80\;ms}$, $\mathbf{O_{exec}=6}$ dB, $\mathbf{O_{prep}=1}$ dB]}
    \label{hofv}
\end{figure}
Fig. \ref{rlfv} depicts the effect of velocity of the UE on the probability of RLF. The velocity varies from $10$ kmph to $100$ kmph with a step of $20$ kmph. For the case of Rayleigh fading, we observe an initial increase in RLF probability as the velocity increases upto $50$ kmph. As the velocity increases  beyond $50$ kmph, the RLF probability is almost unchanged. This is because with higher velocity the UE leaves the coverage of the serving cell faster, thus increasing the chance of an RLF. However, as the velocity keeps increasing beyond a threshold value of $50$ kmph, its effect on the RLF remains the same. Similarly, for the case of Rician fading, we observe an initial increase in RLF probability as the velocity increases up to $100$ kmph. However, as the velocity increases beyond $100$ kmph, we observe that the RLF probability starts decreasing. This is because the LoS provided by the Rician fading channel provides higher RSRP compared to that of Raleigh counterpart. As a result, the chance of condition \eqref{prep_offset} violation is less in Rician fading channel. Moreover, due to higher RSRP, the chance of RLF during $T_{prep}$ duration is also less, which increase the chances for successful handovers to new cells. This observation can be correlated with the result presented in Fig. \ref{hofv}, which shows variation of HOF probability with UE velocity. Here RLF has been considered as the primary cause of HOF, hence the HOF probability follows the same trend as RLF probability.

\begin{figure}[htbp]
    \centering
    \includegraphics[page=4,clip, trim=5cm 7.0cm 2.5cm 4cm,scale=.36]{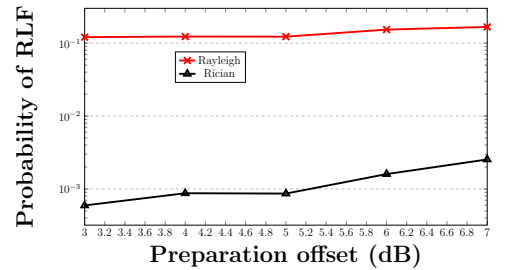}
    \caption{Probability of RLF (in log scale) vs. preparation offset [$\mathbf{T_{prep}=100\;ms}$, $\mathbf{T_{exec}=80\;ms}$, $\mathbf{O_{exec}=6}$ dB]}
    \label{rlfp}
\end{figure}

\begin{figure}[htbp]
    \centering
    \includegraphics[page=5,clip, trim=5cm 7.0cm 2.5cm 4cm,scale=.36]{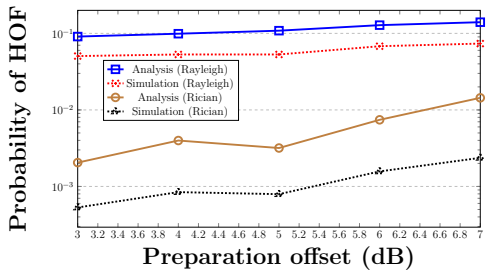}
    \caption{Probability of HOF (in log scale) vs. preparation offset [$\mathbf{T_{prep}=100\;ms}$, $\mathbf{T_{exec}=80\;ms}$, $\mathbf{O_{exec}=6}$ dB]}
    \label{hofp}
\end{figure}

Figs \ref{rlfp} and \ref{hofp}, depict the relationship of the probabilities of RLF and HOF with respect to $O_{prep}$. In both the figs, $O_{prep}$ varies from $3$ dB to $7$ dB with a step of $1$ dB. Irrespective of fading, we observe an increasing trend for both RLF and HOF. This is because increasing the value of $O_{prep}$ increases the chance of violating the handover preparation condition \eqref{prep_offset}. As a result, the UE  has to restart the handover preparation phase frequently, resulting in higher delay in initiating handover. Such delay results in higher RLFs and subsequent HOFs. It may be observed that the simulation result closely resembles the analytic result. Minor differences between simulation and analytic result is observed due to the approximations in the Markov model: number of states in handover preparation and handover execution phases have been computed as ceiling of $\frac{T_{prep}}{T_{sample}}$ and  $\frac{T_{exec}}{T_{sample}}$ which results in extra $2$ states. As a result, for both Raleigh and Rician fading cases, the HOF value obtained from analysis is lesser than the HOF value obtained through simulation. It may be observed that HOF probability for Raleigh fading cases are higher compared to that of Rician counterpart. This is because of less frequent violation of condition \eqref{prep_offset} due to higher RSRP in Rician fading channel. 

\begin{figure}[htbp]
    \centering
    \includegraphics[page=6,clip, trim=5cm 7.0cm 2.5cm 4cm,scale=.36]{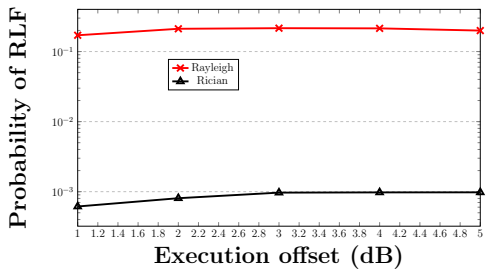}
    \caption{Probability of RLF (in log scale) vs. execution offset [$\mathbf{T_{prep}=100\;ms}$, $\mathbf{T_{exec}=80\;ms}$, $\mathbf{O_{prep}=1}$ dB]}
    \label{rlfe}
\end{figure}
\begin{figure}[htbp]
    \centering
    \includegraphics[page=7,clip, trim=5cm 7.0cm 2.5cm 4cm,scale=.36]{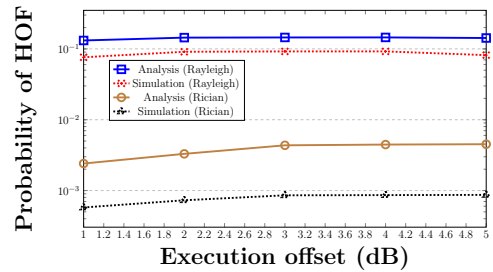}
    \caption{Probability of HOF (in log scale) vs. execution offset [$\mathbf{T_{prep}=100\;ms}$, $\mathbf{T_{exec}=80\;ms}$, $\mathbf{O_{prep}=1}$ dB]}
    \label{hofe}
\end{figure}

Figs \ref{rlfe} and \ref{hofe} depict the effect of $O_{exec}$ on RLF and HOF probabilities.  In both figs, the offset varies from $1$ dB to $5$ dB with a step of $1$ dB. For both Rayleigh and Rician fading channels, we observe an initial increase in the RLF and HOF probabilities. Higher offset values lead to a greater chance of the condition \eqref{exec_offset} to be violated. This results in longer waiting time for the condition \eqref{exec_offset} to be satisfied. The delay in handover caused by the increased waiting time results in RLF and subsequent HOF. It is interesting to observe that the trend stabilizes after $2$ dB because, by that time the handover execution phase has already begun and the UE has moved far enough from the serving cell such that RSRP from the target cell become higher. Consequently, the difference between the RSRP values of the serving and the target cell is large enough to hold condition \eqref{exec_offset}. Hence, no further decrease in RLF/HOF is observed as $O_{exec}$ increases beyond $2$ dB. The analytic result closely resembles the simulation result. This is because of the approximation made in analytic modelling as described before. The HOF probability for Raleigh fading is higher compared to that of Rician because, the condition \eqref{exec_offset} has higher chance of being violated for Raleigh fading channel compared to that of Rician counterpart. This because of the fact that the RSRP in Rician fading channel is more stable compared to that of Raleigh because of the presence of the LoS path as described above.

\paragraph*{Observation on $T_{exec}$ and $T_{prep}$} In this study, $T_{exec}$ and $T_{prep}$ has been set to $100$ ms and $80$ ms respectively. It has been observed from simulation that for higher $T_{exec}$ and $T_{prep}$ ($T_{exec} > 100$ ms and $T_{prep} >80$ ms), RLFs and subsequent HOFs are certain events because of higher handover delay. For lower values of $T_{exec}$ and $T_{prep}$, effect of fading is not visible because number of sampling instances during $T_{exec}$ and $T_{prep}$ decreases. 

\subsection{Analysis of handover latency and handover packet loss}
In this section, we evaluate the effect of $O_{prep}$, $O_{exec}$, $T_{prep}$ and $T_{exec}$ on handover latency and handover packet loss. Handover latency is defined as the expected time between beginning of handover preparation phase and the end of handover execution phase, i.e., $\mathbf{E}[t_6 + T_{exec}-t_1]$ (in Fig. \ref{cho_mech}). Accordingly, handover packet loss is defined as the number of packets communicated in down-link direction during handover, and therefore lost. Handover packet loss has been computed as:
\begin{equation}
    \text{Handover packet loss }= \lambda \times \mathbf{E}[t_6 + T_{exec}-t_1].
\end{equation}
\noindent where $\lambda$ is the packet arrival rate and $\mathbf{E}[t_6 + T_{exec}-t_1]$ is the expected value of the handover latency which has been measured from simulation traces.

\begin{figure}[htbp]
    \centering
    \includegraphics[page=18,clip, trim=5cm 6.0cm 2.5cm 4cm,scale=.36]{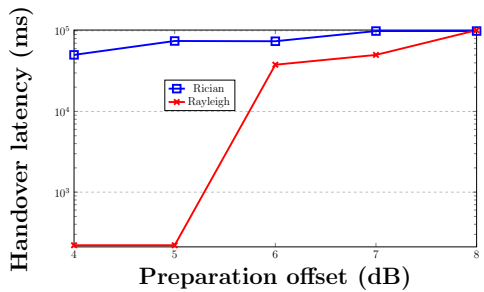}
    \caption{Handover latency (in log scale) vs. preparation offset [$\mathbf{T_{prep}=100\;ms}$, $\mathbf{T_{exec}=80\;ms}$, $\mathbf{O_{exec}=6}$ dB]}
    \label{lat_prep}
\end{figure}

\begin{figure}[htbp]
    \centering
    \includegraphics[page=21,clip, trim=5cm 7.0cm 2.5cm 4cm,scale=.36]{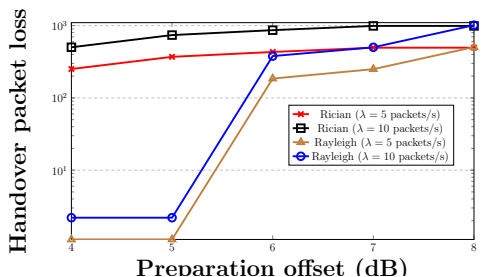}
    \caption{Handover packet loss (log scale) vs. preparation offset [$\mathbf{T_{prep}=100\;ms}$, $\mathbf{T_{exec}=80\;ms}$, $\mathbf{O_{exec}=6}$ dB]}
    \label{pkl_prep}
\end{figure}

Fig. \ref{lat_prep} depicts the effect of $O_{prep}$ on handover latency and handover packet loss. Here the offset varies from $4$ dB to $8$ dB with a step of $1$ dB. For both Rayleigh and Rician fading scenarios, we observe an increasing trend in the handover latency with increasing values of $O_{prep}$. This is because for higher values of $O_{prep}$, the RSRP of the target cell also needs to be higher for condition \eqref{prep_offset} to be satisfied. Fluctuations in the RSRP of both the target and the source cells due to fading often result in violation of condition \eqref{prep_offset}, resulting in restarting the handover preparation phase. The chances of violating condition \eqref{prep_offset} for a given magnitude of fluctuation in RSRP value of the target gNB increases with increasing value of $O_{prep}$. The increasing number of violations in condition \eqref{prep_offset} results in an increased number of restarts of the handover preparation phase, and thereby increasing the handover latency. No specific trend is observed for Rician fading scenario. However, it may also be observed that the handover latency for the Rician fading is higher compared to that for Rayleigh fading. The presence of an LoS component for the case of Rician fading improves the coverage of the serving cell which results in a stronger RSRP at the UE as compared to the Rayleigh fading case. Therefore, in case of Rician fading, after an initial violation of condition \eqref{prep_offset}, the UE needs to move farther from the source cell to satisfy condition \eqref{prep_offset} again as compared to Rayleigh counterpart. As a result, handover delay in case of Rician channel is higher compared to that of Rayleigh channel. Fig. \ref{pkl_prep} shows the variation of packet loss rate with $O_{prep}$. Packet loss shows an increasing trend with increasing $O_{prep}$. This is because of the increasing trend of the handover latency with increasing $O_{prep}$ as described above. Moreover, the handover packet loss is higher for higher values of $\lambda$. This is because of the higher number of packet arrival during handover duration.

\begin{figure}[htbp]
    \centering
    \includegraphics[page=19,clip, trim=5cm 6.0cm 2.5cm 4cm,scale=.36]{graphs.pdf}
    \caption{Hanover latency (log scale) vs. execution offset [$\mathbf{T_{prep}=100\;ms}$, $\mathbf{T_{exec}=80\;ms}$, $\mathbf{O_{prep}=1}$ dB]}
    \label{lat_exec}
\end{figure}

\begin{figure}[htbp]
    \centering
    \includegraphics[page=22,clip, trim=5cm 7.0cm 2.5cm 4cm,scale=.36]{graphs.pdf}
    \caption{Hanover packet loss (log scale) vs. execution offset [$\mathbf{T_{prep}=100\;ms}$, $\mathbf{T_{exec}=80\;ms}$, $\mathbf{O_{prep}=1}$ dB]}
    \label{pkl_exec}
\end{figure}

Figs \ref{lat_exec} and \ref{pkl_exec} depict the effect of $O_{exec}$ on handover latency. In both the figs, the offset varies from $4$ dB to $7$ dB with a step of $1$ dB. It is to be noted that similar to the handover preparation phase, violation of condition \eqref{exec_offset} results in a restart of the handover execution phase. For both Rayleigh and Rician fading scenario, the handover latency monotonically increases with increasing values of $O_{exec}$. This is because the chances of violating condition \eqref{exec_offset} increases with increasing $O_{exec}$, and thereby increasing the number of restarts of the handover execution phase leading to an increased handover latency. It may also be observed that the handover latency for the Rician fading is higher compared to that of Rayleigh fading scenario. Similar to Fig. \ref{lat_prep}, after an initial violation of condition \eqref{exec_offset}, the UE needs to move a higher distance from the source cell for condition \eqref{exec_offset} to be satisfied in case of the Rician channel as compared to Rayleigh counterpart. As a result, handover latency in Rician fading channel is higher compared to that of Raleigh fading scenario. Fig. \ref{pkl_exec} shows that handover packet loss monotonically increases with increasing $O_{exec}$ and $\lambda$. The reason behind is similar to that described for $O_{prep}$ (Figs \ref{lat_prep} and \ref{pkl_prep}).

\section{Conclusions} \label{con}

In this work, an effort has been initiated to explore the effect of various mobility parameters such as $O_{exec}$, $O_{prep}$, $T_{exec}$ and $T_{prep}$ under different UE velocity and channel fading characteristics. We have considered handover failure probability, handover latency and handover packet loss as performance evaluation metrics. A Markov model based analytic framework has been proposed to characterize handover failure for conditional handover mechanism. Results obtained from the analytic model has been validated against extensive simulation results. Our study reveal that optimal configuration of $O_{exec}$, $O_{prep}$, $T_{exec}$ and $T_{prep}$ is actually conditional on underlying UE velocity and channel fading characteristics. This study will be helpful for the mobile operators to choose appropriate thresholds of the mobility parameters based on channel condition and UE velocities.

\section{Acknowledgments}
This research is supported by Opportunity for Undergraduate Research (OUR) scheme (Project no: OUR20230014) of Shiv Nadar Institution of Eminence (SNIoE), Delhi, NCR. A part of this work has been submitted to SNIoE as an internal project report. 

\bibliographystyle{IEEEtran}
\bibliography{sample-base}
\end{document}